\begin{document}

\title{A Searchable Symmetric Encryption Scheme using BlockChain}

\author{Huige Li\inst{1 \and 2}\and Fangguo Zhang \inst{1 \and 2,}\thanks{Corresponding author, \email{isszhfg@mail.sysu.edu.cn}}\and Jiejie He\inst{1\and 2}\and Haibo Tian\inst{1,2,}}
\authorrunning{4}
\institute{School of Data and Computer Science, Sun Yat-Sen University, Guangzhou 510006, China\and Guangdong Key Laboratory of Information Security, Guangzhou 510006, China}
\maketitle

\begin{abstract}
At present, the cloud storage used in searchable symmetric encryption schemes (SSE) is provided in a private way, which cannot be seen as a true cloud. Moreover, the cloud server is thought to be credible, because it always returns the search result to the user, even they are not correct. In order to really resist this malicious adversary and accelerate the usage of the data, it is necessary to store the data on a public chain, which can be seen as a decentralized system. As
the increasing amount of the data, the search problem becomes more and more intractable, because there does not exist any effective solution at present.

In this paper, we begin by pointing out the importance of storing the data in a public chain. We then innovatively construct a model of SSE using blockchain(SSE-using-BC) and give its security definition to ensure the privacy of the data and improve the search efficiency. According to the size of data, we consider two different cases and propose two corresponding schemes. Lastly, the security and performance analyses show that our scheme is feasible and secure.
\end{abstract}

\keywords{
Searchable Encryption, Transaction, BlockChain, Cloud-Storage, Symmetric Encryption, Privacy.
}

\section{Introduction}

The cloud storage can allow users with limited physical resources to get access to their own data at any devices, and it only charges a few fee, therefore, more and more people prefer to upload their data onto the cloud. However, if the data are not processed before storage, the confidentiality and privacy of data cannot be guaranteed effectively. One feasible solution is to encrypt them by using a regular encryption algorithm before uploading. When the users want to retrieve the segments of the data, they download all the data and filter out that they need. However, if the number of
data that contain keyword $w$ is large, this method is unpractical. To address this issue, Song et al. firstly proposed Searchable Symmetric Encryption (SSE) term \cite{song2000practical}.

SSE aims to solve the search problem on ciphertexts. In this model, it usually involves three parties: Data owner, server and user. The data owner
 encrypts his $n$ documents $\mathbb{D}=\{D_{1},D_{2},\ldots, D_{ n}\}$ into ciphertexts $\mathbb{C}=\{C_{1},C_{2},\ldots,C_{ n}\}$. In order to improve the search efficiency, an auxiliary information, called as Index $\mathcal I$, is generated. Then the data owner sends $\mathcal{I}$ and $\mathbb{C}$ to the server. The server can use the search token $t(w)$ received from the user to compute the pointers to the documents that the user needs, and returns the corresponding documents to the user. At last, the user uses the private key to decrypt them locally.

 In fact, in order to reduce the space complexity, Song's scheme \cite{song2000practical} did not employ the index structure, so the search complexity is linear in the length of the document. However, they pointed that to improve the search efficiency, it is feasible to reduce the level of security properly. That is to say, a SSE scheme is secure as long as it meets the following demands:
 \begin{enumerate}
   \item The server \emph{cannot learn anything} about the plain documents when it only gets the ciphertexts;
   \item When the server executes search algorithm, it also \emph{cannot} learn anything about the plain documents and the plain keywords querying \emph{except} the search results.
 \end{enumerate}
  Undoubtedly, a stronger privacy guarantee about SSE can be achieved by using the oblivious RAMs \cite{goldreich1996software} or the private information retrieval (PIR) technology \cite{Ishai2006Cryptography}. But they need multiple interactions between the server and the user, which are unpractical.

The cloud storage used above is provided privately. Though it can guarantee the privacy and the security of data, it limits the usage of data. For example, a medical researcher wants to observe the symptoms of the patients infected with human immunodeficiency virus (HIV) to further provide a possible treatment. Therefore, he needs to get access to a large of electronic medical records $(EMRs)$ from different clouds. However, the researcher usually chooses the clouds with low service charges, which may affect the experiment results. For another example, because the clouds do not share their data to each other, the doctors cannot make more effective treatment based on the previous diagnostic results from another hospital. Although cloud storage provides convenience to users, it allows the cloud storage provider to spend more money and energy to maintain these data. Therefore, this cloud storage device cannot be called a real cloud.

Moreover, the server in SSE usually is credible, who always returns the result to the user, even if the result is not correct. That is to say, if the server is a malicious adversary, he may return a wrong result to the user. Though it can utilize the message authentication codes (MAC) technology to resist such adversary, there is a precondition: The server must return something to the user. Therefore, this adversary cannot be called as a malicious adversary. There does not exist any effective solutions to resist a real malicious adversary. However, we can find an alternative: If the server returns wrong results, it cannot charge the service fee. The user can choose someone else to help it to search.
\subsection{Related Work}
Though Song et al. \cite{song2000practical} pointed out that it can introduce some auxiliary information to improve the search efficiency, they did not provide a solution. Goh et al. firstly used Bloom Filter to construct an Index for each document \cite{goh2003secure}. They thought a SSE scheme was secure if it was indistinguishability against chosen keyword attacks (IND-CKA). However, this definition was valid only when the users performed all the searches at once. In addition, it did not require the trapdoor to be secure, which caused the Index unsafe. Therefore, Curtmola et al.\cite{curtmola2006searchable} redefined the security definition and gave two feasible schemes with $\textrm{O}(D(w))$ complexity, where $D(w)$ denotes the number of the documents that contain keyword $w$. This is the first solution to achieve sub-linear complexity.


 Alderman et al. presented a SSE scheme supporting multi-level access policy \cite{aldermanmulti}. The Index was different from that in \cite{curtmola2006searchable}. Namely, the items in the Index $I$ are ordered according to the access level. The documents with higher level will be placed in the front, and those with lower level are put in the end. When retrieving, if the user has lower permission, he only gets the search token with lower grading. However, their construction cannot guarantee the privacy of the users.

Golle et al. firstly considered the conjunction operation on keywords, and gave two solutions\cite{golle2004secure}. The first construction is based on \emph{Decisional Diffie-Hellman} ($\mathbb{DDH}$) assumption, and the search complexity is linear in the number of documents stored on the server. The second protocol is based on \emph{Bilinear Decisional Diffie-Hellman} ($\mathbb{BDDH}$) assumption, whose search complexity is linear in the number of keyword fields. The subsequent work \cite{moataz2013boolean} is also linear in the number of documents, but their scheme can support more general model-\emph{boolean} query.

Cash et al. were the first to reduce the search complexity of boolean expression on keywords into sub-linear \cite{cash2013highly}. Namely, when retrieving the documents that contain keywords
$w_{1},\ldots,w_{ n}$, the search complexity is only linear in the size of the smallest set $DB(w_{ i}) (1\leq i \leq n)$, where $DB(w_{ i})$ denotes the documents that contain keyword $w_{ i}$. However, their protocol only efficiently support such form: $w_{1}\wedge \phi(w_{2},\ldots,w_{ n})$ where $\phi$ is a Boolean formula. Kamara et al. addressed the disjunctive expression $w_{1}\vee w_{2}\vee \ldots\vee w_{ n}$ issue, which can reduce search complexity of the arbitrary Boolean expression into sub-linear in the worst case\cite{kamara2017boolean}. However, because they used the set theoretic terms, the server can compute all the document set $DB(w_{ i}) (i=1,\ldots,n)$. That is to say, it leaks more information than that in Cash's scheme \cite{cash2013highly}.

 Li et al. proposed two methods to solve the problem of fuzzy search\cite{Li2010Fuzzy}, which needed multiple communications. Boldyreva et al. firstly gave the security definition for fuzzy SSE scheme, whose search complexity is sub-linear\cite{boldyreva2014efficient}. Wong et al. used asymmetric scalar product preserving encryption technology to solve the problem of $k-$nearest neighbor (kNN) computation on encrypted database \cite{wong2009secure}. Cao et al. put forward the multi-keyword ranked search algorithm by using kNN idea \cite{cao2014privacy}. Fu et al. designed a central keyword semantic extension ranked scheme \cite{FuHRWW17}.

Kamara et al. proposed a parallel search scheme \cite{kamara2013parallel} which needed $O(r)$ parallel time when querying keyword $w$, where $r$ denoted the number of documents containing keyword $w$. Stefanov et al. firstly solved the problem of forward privacy for dynamic SSE scheme \cite{stefanov2014practical} by using hierarchical structure. Bost et al. pointed out that the scheme \cite{stefanov2014practical} was insecure, and gave the improved schemes \cite{bost2016verifiable,bost2016ovarphiovarsigma}. Van Liesdonk et al. solved the problem of how to dynamically updating the Index and the documents \cite{van2010computationally}. The subsequent works are \cite{kamara2012dynamic,cash2014dynamic,naveed2014dynamic,guo2017dynamic}.

B{\"o}sch et al. made a whole survey of SSE protocols \cite{bosch2015survey}. In order to break the link among the access pattern, the search pattern and the size pattern, Cui et al. suggested that in addition to introducing some dummy data, it should re-randomize and shuffle the physical location of the searched data after executing each query \cite{cui2017secure}.

The main adversary considered in SSE is honest-but-curious \cite{song2000practical,curtmola2006searchable,kamara2013parallel,cash2013highly,kamara2012dynamic,cash2014dynamic,naveed2014dynamic}. Kurosawa et al. firstly used the Message Authentication Code (MAC) technology to resist malicious adversary \cite{kurosawa2012uc}. Cheng et al. utilized indistinguishability obfuscation ($\mathcal{IO}$) against the malicious adversary \cite{cheng2015verifiable}, which can also resist the malicious user. Other works that resist malicious adversary are \cite{bost2016verifiable,bost2016ovarphiovarsigma,Kurosawa2013How}. Dai et al. made use of the physically unclonable function (PUF) to resist the memory attack \cite{Dai2016Memory}. Li et al. introduced the coercer into searchable symmetric encryption \cite{Li2017Deniable}.

\textbf{Bitcoin} is an emerging electronic digital currency in the peer-to-peer (P2P) network. It was firstly proposed by Satoshi Nakamoto \cite{nakamoto2008bitcoin}. The first bucket of Bitcoin was issued in 2009. According to the original assumption, there is only 21 million Bitcoin which will entirely come into the market in 2040. The generation of Bitcoin does not depend on the trusted entity, everybody (i.e. miner) in Bitcoin system may issue a certain amount of Bitcoin as long as he mines a right nonce which is got approval by the majority of nodes.

In order to support the audit, it demands the ledger to be public. Namely, the transactions are publicly stored on the blockchain. After a transaction was created, it was broadcasted to the blockchain where the miners will use the proof-of-work mechanism to verify it. Once it is accepted, the transaction will be stored on a block which is produced on every ten minutes by using the cryptography technology. The Bitcoin can be seen as a purely decentralized system which requires the majority of nodes in peer-to-peer network to be honest. Comparing to the previous electronic currency \cite{chaum1983blind,chaum1984blind,camenisch2005compact}, Bitcoin can support returning change.

There are many works about blockchain in recent years. Ron et al. made a quantitative analysis for the Bitcoin Transaction \cite{ron2013quantitative}. Vitalik et al. firstly introduced smart contract terminology \cite{lavery1999smart} into bitcoin system, and proposed ethereum \cite{buterin2014next}, which can be seen as a sub-chain of the Bitcoin. Andrychowicz et al. and Bentov et al. respectively introduced the Bitcoin into multiparty computations to solve the fairness problem \cite{andrychowicz2014secure,andrychowicz2014fair,bentov2014use}. In fact, the protocols \cite{andrychowicz2014secure,andrychowicz2014fair,bentov2014use} can be seen as a smart contract, because it introduces commitment algorithm $h(x)$ in the out-script of the transaction. Swan put forward several scenarios that the blockchain can be applied to \cite{swan2015blockchain}, one of them is Blockchain health. It provides a structure to store the health data on the blockchain such that it can be analyzed but remain private. The patients who put their own electronic medical record (EMR) onto the blockchain can obtain a certain amount of healthcoin. In this blockchain system, each researcher, such as doctors, pharmacies, insurance companies, and so on, can get access to these data as long as they have the corresponding private keys. However, they did not give an effective search method.
\subsection{Our Contribution}

Putting the data on an open chain is of significant, because they have potential value in medicine and so on. Meanwhile, this open chain can be seen as a real cloud storage since each one can contribute some parts of their own storage space. However, when the data increases, how to perform search quickly is intractable. Taking the Bitcoin system for example, suppose that Alice wants to read transactions created over a period of time, she has to read the data from the last block to the first block, which means the search efficiency is linear into $O(|T|)$, where $|T|$ denotes the number of transactions stored on the blockchain. Therefore, it is necessary to solve the search problem on the existing blockchains

Moreover, at present, the data usually is stored on a private cloud storage, which may limit the usage of the data. If a researcher wants to retrieve some data, he needs to ask the user where his data is stored firstly. Besides, some cloud storages may charge higher service fees. In addition, in the existing SSE schemes, the server is credible,i.e., it often returns the results to the user,
even if the results are incorrect. Though, it can use message authentication code (MAC) to verify whether the result is right or not,
it ends in failure when the server returns nothing. Therefore, it is necessary to weaken the server's rights.

To solve the above issues, we combine the blockchain with SSE, and the corresponding contributions are as follows:
\begin{itemize}
\item We store the encrypted data onto the blockchain, which is a decentralized system. In order to support retrieving, we use the blockchain to construct a new SSE model, which is called as SSE-using-BC. In order to guarantee the privacy and confidentiality of data, we give its security definition.
\item According to the size of the data, we construct two different schemes, which we prove secure under our security definition.
 \item We implement our scheme in Linux system, and the experimental results show that our scheme is feasible and secure.
 \end{itemize}
\noindent\textbf{Organization.} The remainder of this paper is organized as follows. In section 2, we review some tools and the terminologies that will be used in our construction. In section 3, we define our SSE-using-BC model and list the security it should satisfy. In Section 4 we propose our concrete SSE schemes. The analysis of performance and security are shown in section 5. The last section is conclusion.
\section{Preliminaries}
In this section, we mainly review the definition of negligible function, traditional models of SSE and transaction happened in the Bitcoin system respectively. Then we list some notations that will be used.

\begin{definition}
A function $f$ is negligible if for every polynomial $p(\cdot)$ there exists an Integer $N$ such that for all integer $n>N$ it holds that $f(n)<$ $\frac{1}{p(n)}$.
\end{definition}

\subsection{The model of SSE}
 It involves three participants in SSE: \emph{data owner}, \emph{server} and  \emph{user}. The user and the data owner can be a same person. As shown in Fig 1: Suppose that the data owner has $n$ documents $D_1,D_2,...,D_n$ which need to be stored on a private cloud. He encrypts these documents into ciphertexts $C_{1},C_{2},...,C_{ n}$ and generates a corresponding Index $I$, which will be sent to the cloud. When a user wants to retrieve the documents that contain keyword $w$, he computes the search token $t_w$ by taking the keyword $w$ and key $K$ as input, which will  be sent to the cloud server. The server finds the document identifiers by combining $t_{w}$ with $I$, and returns the corresponding documents $C_{ij}$ to the user. At last, the user decrypts $C_{ij}$ locally.

A SSE scheme is secure if the following properties hold:
\begin{itemize}
\item The server cannot learn anything about the plain documents when it only got the ciphertexts.
 \item Once the server executes search, except the search results, it also cannot learn anything about the plain documents and the potential keyword.
 \end{itemize}

\begin{figure}[ht]
\begin{center}
\includegraphics[width=5cm,height=4cm]{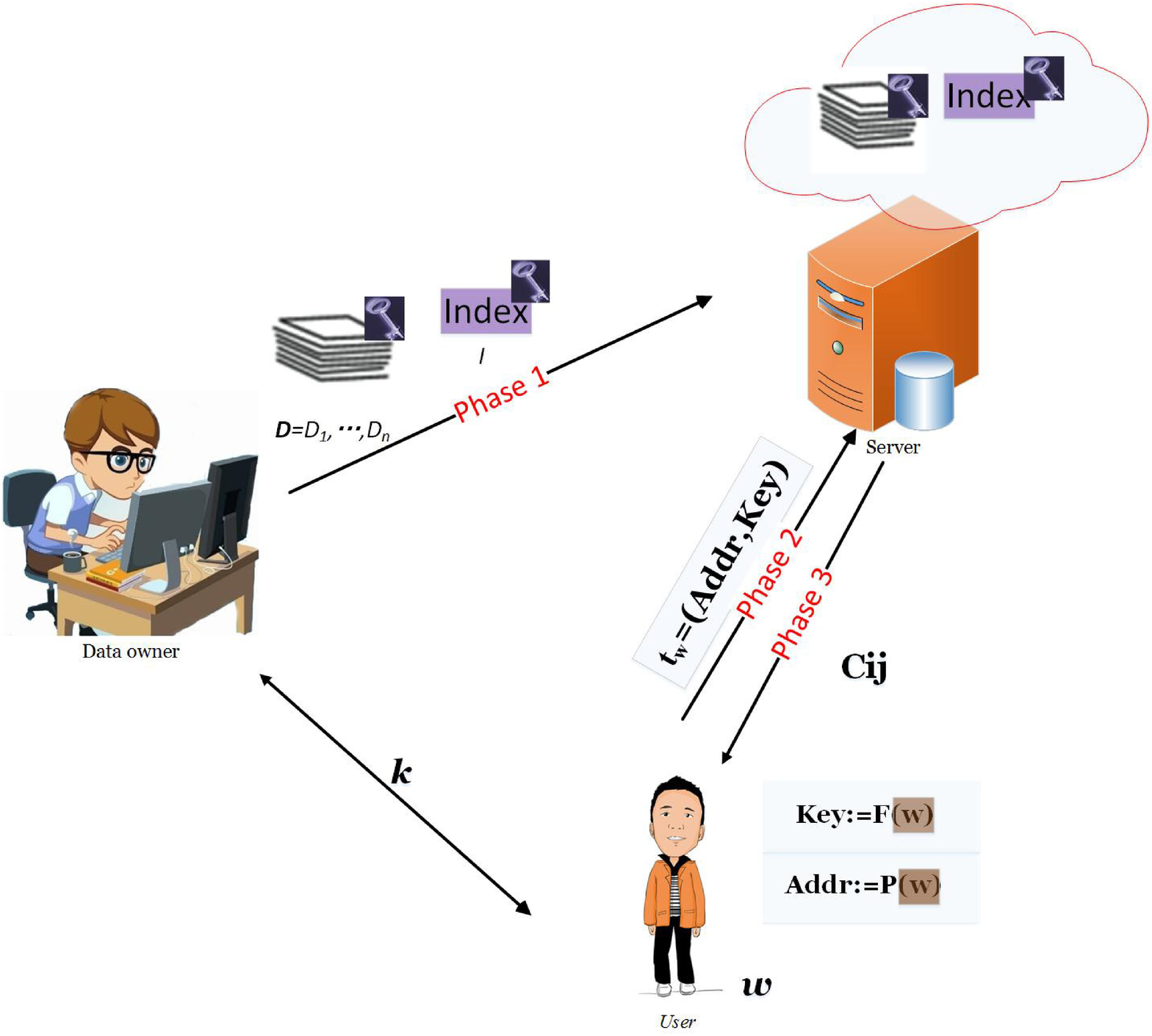}
\caption{SSE Model}
\end{center}
\end{figure}
\subsection{Bitcoin currency system}

The Bitcoin system is composed of addresses and transactions between them. The address usually is a hash value generated by user's public key. Each user can have a pair of keys (i.e., the private key and the public key) when he creates a transaction \cite{andrychowicz2014fair}. The private key is used to sign transactions, while the public key is used to verify whether the signatures $\sigma$ of these transactions are valid or not. For brevity, let we use $(A.pk,A.sk)$ to denote the key pair of the user $A$, and write $\sigma=sig_{A}(m)$ to be the signature of transaction $m$ by using the private key $sk$ of $A$, and $ver_{A}(m,\sigma)$ be the verification by using the public key $pk$ of $A$.

A \emph{transaction} in the Bitcoin system can have multiple inputs and outputs, which describe the circulation of Bitcoin. Let $y_{ i}$ be the hash value of previous transaction $T_{{y_i}}$, $a_{ i}$ be the index of the output of transaction $T_{{y_i}}$. For a transaction, we will use $\sigma_{i}$ to represent its input-script and $\pi_{ i}$ be its output-script, both of them can be written in Bitcoin scripting language, i.e., the stack based language \cite{andrychowicz2014secure}. Therefore, a transaction can be expressed as $T_x=((y_{1},a_{1},\sigma_{1}),$ $...,(y_{ l},a_{l},\sigma_{ l}),$ $(v_{1},\pi_{1}),$$ ...,$ $(v_{ l},\pi_{ l}),t)$, where $(y_{1},a_{1},\sigma_{1}),...,(y_{ l},a_{ l},\sigma_{ l})$ denote the inputs, $(v_{ 1},\pi_{1}),$$ ...,$ $(v_{ l},\pi_{ l}),t) $ denote the outputs, and $v_{ i}$ is the amount of coins. Here, $t$ is a time which is not a compulsory requirement in a transaction. If a transaction includes time t in the out-script, it means that this transaction will be valid only after $t$ time. We will write $[T_x] =(y_{1},a_{1}),...,(y_{ l},a_{ l}),(v_{1},\pi_{1}),...(v_{ l},\pi_{ l}),t)$ to represent the body of $T_{ x}$.

A transaction is valid if and only if it satisfies that:
\emph{(1)} The time $t$ is reached. \emph{(2)} The $\pi_{ i}([T_x],\sigma_{ i})(1\leq i\leq l)$ is valid. \emph{(3)} The involved previous transactions $T_{{y_1}},T_{{y_2}},\ldots,T_{{y_l}}$ were not redeemed. If a transaction is accepted by the nodes on the blockchain, it will be included in one block which is produced about every ten minutes.

As shown in figure 2, it is a transaction $T_{ x }= (y_{1},a_{1},\sigma_{1}, v, \pi_{x},t)$, where the input script is a signature, and the output script is a verification algorithm. We call it as a standard transaction.
\begin{figure}[h]
\begin{center}
\includegraphics[width=6cm,height=3cm]{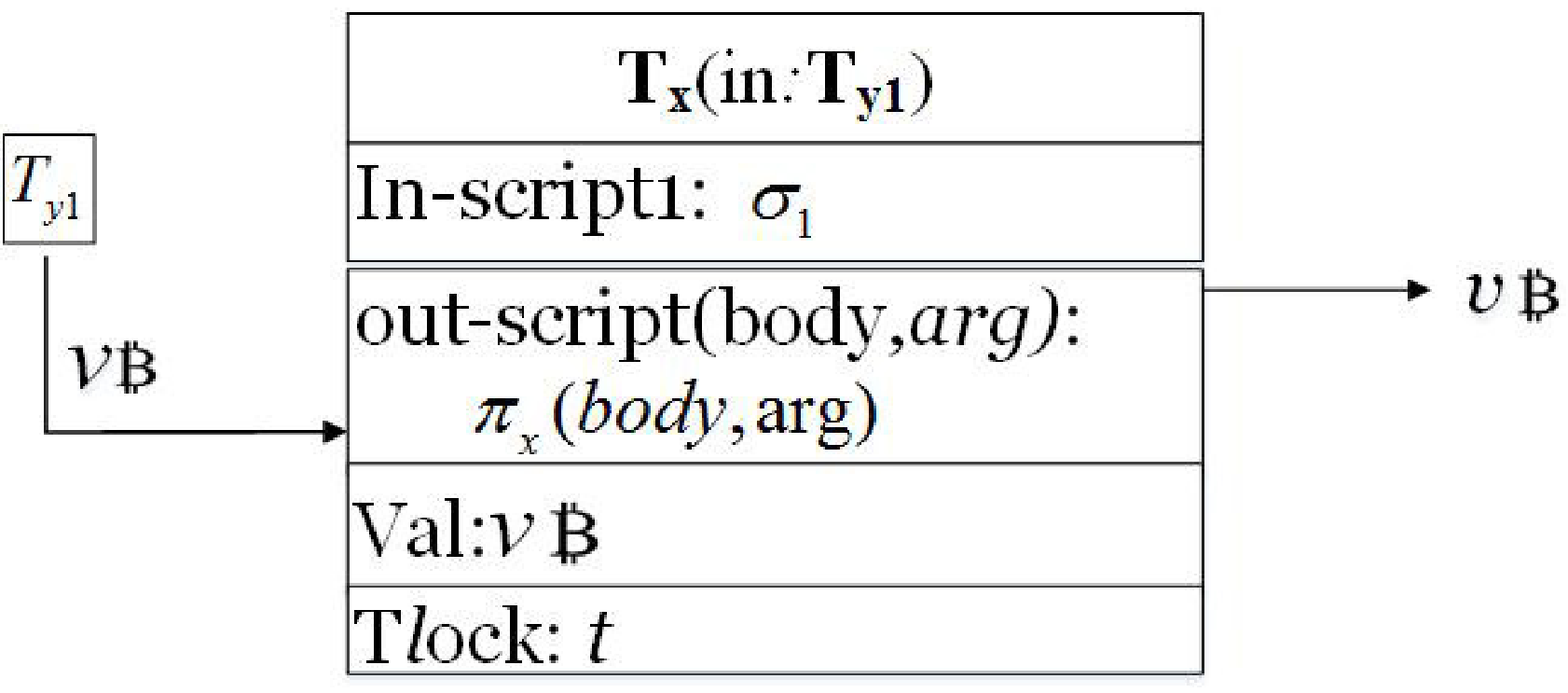}
\caption{transaction $T_x$}
\end{center}
\end{figure}

\vspace{3mm}
Let $x \leftarrow \mathcal{A} (\cdot)$ be the output $x$ of an algorithm $\mathcal{A}$, $x \leftarrow X$ represent an element $x$ sampled uniformly from a set $X$. Let $\varepsilon=(\varepsilon.Enc, \varepsilon.Dec)$ denote a symmetric encryption scheme, where $\varepsilon.Enc$ is the encryption algorithm, and $\varepsilon.Dec$ is the corresponding decryption process. The $a||b$ refers to the concatenation of two string $a$ and $b$. Let $|x|$ represent the length of $x$.
\section{Our System Model}
It is very important to solve the search problem on the blockchain since it is an era of big data nowadays. Take the Bitcoin system for example, each transaction can be seen as a data. When retrieving some transactions, it has to start from the last block until the first block, which means the search efficiency is $O(n)$ where $n$ denotes the number of the data stored on the blockchain. With the increasing number of transactions, this method has become very awkward. Meanwhile, because traditional data often contain the privacy of users, we need to encrypt them before uploading them into the blockchain, which further increases the difficulty to retrieve.

In this section, we firstly build a generic model of SSE-using-BC , then we give its security definition.
\subsection{The model of SSE-using-BC}
As shown in figure 3, it contains the data owner, the user $U^\prime$, the user $Q$ (who is not marked in the Fig.3.) and miners in the model of SSE-using-BC. The data owner has $n$ documents $\mathbb{D}=\{D_{1},D_{2},\ldots,D_{ n}\}$ which need to be uploaded to the blockchain. In order to ensure the privacy and confidentiality of documents, the data owner uses symmetric encryption algorithm to transform them into ciphertexts $C_{1},C_{2},\ldots,C_{ n}$, which will be uploaded on the blockchain in the form of transaction $i (i=1,$ $\ldots,n)$. After they appear on the blockchain, each of them will have a corresponding transaction identifier (TXID). Then, the data owner uses these TXIDs to generate an Index and upload it on the blockchain in the form of transaction $Inx$. The data owner broadcasts the identifier $TX_{Inx}$ of transaction $Inx$ to others. When user $U^\prime$ wants the user $Q$ to help him retrieve the documents that contain keyword $w$, he constructs the transaction $t$ which embeds the information of search token $t(w)$ and $Inx$. If the user $Q$ wants to get the money $d_t$ from transaction $t$, he needs to build transaction $s$ which embeds the information of $C_{ij}$ and the hash $MAC(C_{i1}\|\ldots\|C_{in})$ that the user $U^\prime$ needs. If the transaction $s$ is accepted by the miners, the user $U^\prime$ will get the documents $C_{ij}$. Lastly, he decrypts them locally. If the transaction $s$ does not appear on the blockchain, the user $U^\prime$ will broadcast transaction $p$ to get his $d_t$ dollars back, which is drawn with a dotted line.

\begin{figure}[ht]
\begin{center}
\includegraphics[width=6cm,height=5cm]{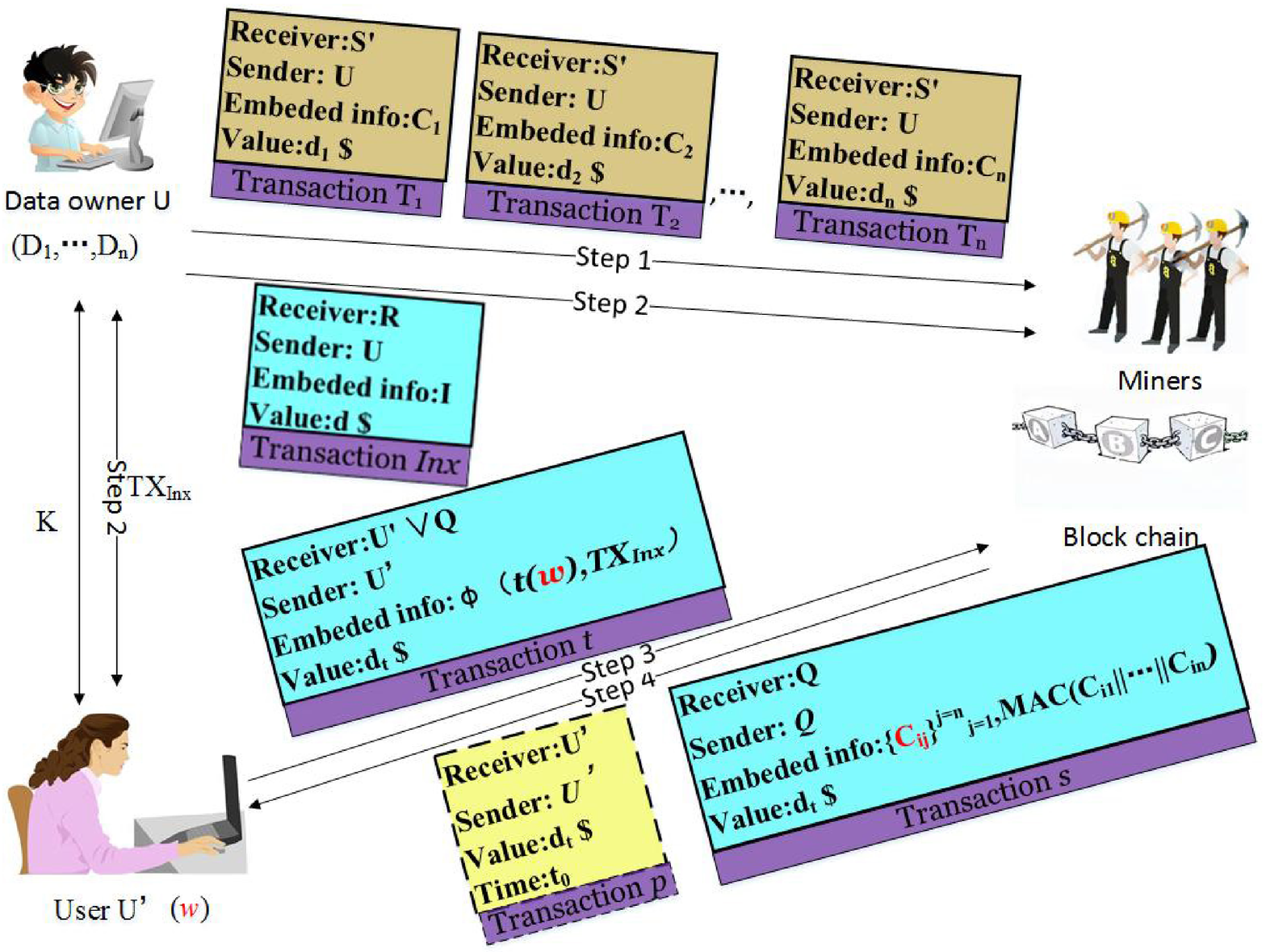}
\caption{The Model of Searchable Symmetric Encryption using BlockChain}
\end{center}
\end{figure}
\textbf{Remark}: The user $Q$, as one of the receivers in the transaction $t$, is not shown in figure 3, because he is not a fixed person. That is to say, when retrieving, the user $U^\prime$ can ask different person to finish it.

 The model of SSE-using-BC is composed of five steps, i.e., $Gen, Enc, Trpdr, Search$ and $Dec$.
\begin{itemize}
  \item $K\leftarrow Gen(1^k):$ is a probabilistic algorithm that is run by the data owner to set up the scheme. It takes a security parameter $k$ as input, and outputs secret key $K$.
      \item $(\{T_j\}_{j=1}^{j=n}, Inx, TX_{Inx})\leftarrow Enc(K, \mathbb{D},\{d_1\$,$ $d_2\$,\ldots,$ $d_n\$,d\$\}):$ is a probabilistic algorithm that is run by the data owner to encrypt documents. It takes the secret key $K$, $\mathbb{D}=\{D_1,\ldots,D_n\}$ and $\{d_1\$,\ldots,$ $ d_n\$,d\$\}$ as input, and output a sequence of transactions $T_1,\ldots,T_n$, an index transaction $Inx$ and $TX_{Inx}$ which is an identifier of transaction $Inx$. The data owner lastly broadcasts $TX_{Inx}$ to legitimate users.
      \item $t\leftarrow Trpdr(K, w, TX_{Inx}, d_t\$)$: is a deterministic algorithm that is run by the user $U^\prime$. It takes as input the secret key $K$, keyword $w$, identifier $TX_{Inx}$ and $d_t\$$, and outputs the transaction $t$.
      \item $(s/p) \leftarrow Search(t)$: is a deterministic algorithm that is run by the user $Q$ and the user $U^\prime$. When the user $Q$ runs this step, it takes the transaction $t$ as input, and outputs transaction $s$. If this step is run by user $U^\prime$, it takes the transaction $t$ as input, and outputs transaction $p$.
  \item $\{D_{ij}\}\leftarrow Dec(K, s)$: is a deterministic algorithm run by the user $U^\prime$ to recover the documents. It takes the secret key $K$ and transaction $s$ as input, and outputs the documents $\{D_{ij}\}$ that he needs.
\end{itemize}

A SSE-using-BC scheme is correct if for all $K\in \mathbb{N}$, for all $K$ output by $Gen(1^k)$, for all $\mathbb{D}\subseteq 2^\Delta$, for all $(\{T_j\}_{j=1}^{j=n}, Inx)$ output by $Enc(K, \mathbb{D},$ $\{d_1,$ $\ldots,$ $ d_n,d\})$, for all $w\in \Delta$,

$Search(Trpdr(K, w, TX_{Inx}, d_t))=s\bigwedge Dec$ $(K,$ $s)$ $=$ $\{D_{ij}\}$, for $1$ $\leq j\leq n$.
\subsection{Security Definition}
A SSE-using-BC scheme is secure if it meets the following demands.
\begin{itemize}
  \item The server \emph{cannot learn anything} about the plain documents when it only gets the ciphertexts;
   \item When a search is finished, \emph{except} the search results, the server also \emph{cannot} learn anything about the plain documents and the plain keywords that the user queried.
  \item If the user $Q$ cannot provide the right documents to the user $U^\prime$ in the transaction $s$, he also cannot get the deposit from the transaction $t$ created by the user $U^\prime$.
\end{itemize}
There are two types of adversary. One is adaptive, the other is non-adaptive. For an adaptive adversary, he can choose a new keyword according to the previous keywords and search results. A non-adaptive adversary must choose the search keywords at one time. Here, we only consider the adaptive adversary. We write $\{UTXO\}$ to denote the set of unredeemed transactions, and $x\leftarrow\{UTXO\}$ to represent some transactions are sampled from the set $\{UTXO\}$ such that the total amount of money in these transactions is $x\$ $. We now present our real/ideal simulation paradigm.
\begin{figure}
\begin{center}
\begin{tabular*}{80mm}{l}
\hline
  $\ \ \ \textbf{Real}^\Pi_{{\mathbb{A}}}(k) \ \ \ $  \\ \hline
  $K\leftarrow \textbf{KeyGen}(1^k)$   \\
  $(\mathcal{D},st_{{\mathcal{A}}})\leftarrow\mathcal{A}_{0}(1^k)$   \\
$(d_1,\ldots,d_n,d)\leftarrow \mathcal{A}_{0}(st_{\mathcal{A}},\{UTXO\})$ \\
 $(T_1,\ldots,T_n,Inx,TX_{Inx})\leftarrow Enc(\mathcal{D},K,d_1,\ldots,d_n,d)$  \\
  $(w_1,st_{{\mathcal{A}}}) \leftarrow\mathcal{A}_{1}(st_{\mathcal{A}},T_1,\ldots,T_n,Inx)$ \\
  $d_{w_1}\leftarrow \mathcal{A}_{1}(st_{\mathcal{A}},\{UTXO\})$ \\
  $t_{w_1}\leftarrow Trpdr(K,w_1,TX_{Inx},d_{w_1})$   \\
  $\{C_{11},\ldots,C_{1n}\}\leftarrow \mathcal{A}_1(st_{\mathcal{A}},T_1,\ldots,T_n,Inx)$\\
  $s_1\leftarrow Search(t_{w_1},\{C_{11},\ldots,C_{1n}\})$ \\
  for $2\leq i\leq q$  \\
  $(w_i,st_{{\mathcal{A}}}) \leftarrow \mathcal{A}_{i}(st_{\mathcal{A}},T_1,\ldots,T_n,Inx,t_{w_1},\ldots,t_{w_{i-1}})$ \\
  $d_{w_i}\leftarrow \mathcal{A}_{i}(st_{\mathcal{A}},\{UTXO\})$ \\
  $t_{w_i}\leftarrow Trpdr(K,w_i,TX_{Inx},d_{w_i})$  \\
  $\{C_{i1},\ldots,C_{in}\}\leftarrow \mathcal{A}_i(st_{\mathcal{A}},T_1,\ldots,T_n,Inx)$\\
  $s_i\leftarrow Search(t_{w_i})$ \\
let $Tr=(t_{w_1},\ldots,t_{w_q})$, $S=(s_1,\ldots,s_q)$ \\
output $V=(Inx,\{T_1,\ldots,T_n\},Tr,S)$ and $st_{{\mathcal{A}}}$ \\
\\\hline
\end{tabular*}
\caption{Game $Real^\Pi_{{\mathbb{A}}}(k)$}
\end{center}
\end{figure}
\begin{figure}
\begin{center}
\begin{tabular*}{80mm}{l}
\hline
  $\ \ \ \textbf{Ideal}^\Pi_{{\mathbb{A},\mathbb{S}}}(k) \ \ \ $ \\ \hline
  $(\mathcal{D},st_{{\mathcal{A}}})\leftarrow\mathcal{A}_{0}(1^k)$ \\
  $\mathbb{D}=(d_1,\ldots,d_n,d)\leftarrow \mathcal{A}_{0}(st_{\mathcal{A}},\{UTXO\})$ \\
  $(T_1, \ldots,T_n,Inx,TX_{Inx},st_{\mathcal S})\leftarrow \mathcal S_0 (L(\mathcal D),\mathbb{D})$\\
$(w_1,st_{{\mathcal{A}}}) \leftarrow\mathcal{A}_{1}(st_{\mathcal{A}},\{T_1, \ldots,T_n\},Inx)$\\
  $d_{w_1}\leftarrow \mathcal{A}_{1}(st_{\mathcal{A}},\{UTXO\})$\\
  $(t_{w_1},st_{\mathcal{S}})\leftarrow \mathcal{S}_1 (st_{\mathcal{S}},L(\mathcal D,w_1,TX_{Inx}),d_{w_1})$\\
  $(\{C_{11},\ldots,C_{1n}\},st_{\mathcal A})\leftarrow \mathcal{A}_1(st_{\mathcal{A}},T_1,\ldots,T_n,Inx)$\\
  $(s_1,st_{\mathcal S})\leftarrow \mathcal S_1 (st_{\mathcal{S}}, t_{w_1}, \{C_{11},\ldots,C_{1n}\})$\\
 for $2\leq i\leq q$\\
  $(w_i,st_{\mathcal A}) \leftarrow \mathcal{A}_{i}(st_{\mathcal{A}},\{T_1, \ldots,T_n\},Inx,t_{w_1},\ldots,t_{w_{i-1}})$\\
  $d_{w_i}\leftarrow \mathcal{A}_{i}(st_{\mathcal{A}},\{UTXO\})$\\
  $(t_{w_i},st_{\mathcal{S}})\leftarrow \mathcal{S}_i (st_{\mathcal{S}},L(\mathcal D,w_1,\ldots,w_i,TX_{Inx}),d_{w_i})$\\
  $(\{C_{i1},\ldots,C_{in}\},st_{\mathcal A})\leftarrow \mathcal{A}_i(st_{\mathcal{A}},T_1,\ldots,T_n,Inx)$\\
  $(s_i,st_{\mathcal S})\leftarrow \mathcal S_i (st_{\mathcal{S}}, t_{w_i})$\\
  let $Tr=(t_{w_1},\ldots,t_{w_q})$, $S=(s_1,\ldots,s_q)$ \\
output $V=(Inx,\{T_1, \ldots,T_n\},Tr,S)$ and $st_{{\mathcal{A}}}$ \\
\\\hline
\end{tabular*}
\caption{Game $Ideal^\Pi_{{\mathbb{A},\mathbb{S}}}(k)$}
\end{center}
\end{figure}
\begin{definition}
Let $\Pi=(Gen, Enc,Trpdr,$ $Search,Dec)$ be a SSE-using-BC scheme, $L$ denote the leakage function which can be parameterized by access pattern, search pattern and size pattern defined in \cite{curtmola2006searchable}, $k$ be the security parameter, $\mathbb{A}=(\mathcal{A}_{0},\mathcal{A}_{1},...,\mathcal{A}_{ q})$ be an adversary where $q\in \mathbb{N}$, and $\mathbb{S}=(\mathcal{S}_{0},\mathcal{S}_{1},...,\mathcal{S}_{ q})$  be a simulator. Considering the games $Real^\Pi_{{\mathbb{A}}}(k)$ and $Ideal^\Pi_{{\mathbb{A},\mathbb{S}}}(k)$ shown in the figure 4 and 5.
\end{definition}
We say a SSE-using-BC scheme is adaptively semantically secure if for all polynomial size adversaries $\mathbb{A}=(\mathcal{A}_{0},\mathcal{A}_{1},...,\mathcal{A}_{ q})$ where $q=poly(k)$, there exists a non-uniform polynomial size simulator $\mathbb{S}=(\mathcal{S}_{ 0},\mathcal{S}_{1},...,\mathcal{S}_{ q})$, such that for all polynomial size $\mathcal D$,\\
$|Pr[\mathcal D(V,st_{{\mathcal{A}}})=1:(V,st_{{\mathcal{A}}})\leftarrow Real^\Pi_{{\mathbb{A}}}(k)]-Pr[\mathcal D(V,st_{{\mathcal{A}}})=1:(V,st_{{\mathcal{A}}})\leftarrow Ideal^\Pi_{{\mathbb{A},\mathbb{S}}}(k)] |$ $\leq neg(k)$\\
where the probabilities are taken over the coins of $Gen$, $Enc$, $Trpdr$ and $Search$ processes.

\section{The detailed scheme}
If users upload their documents on the blockchain, any researcher can get access to them conveniently. In order to make the blockchain growing normally, it usually requires that the size of each block is fixed. However, the scale of data is different, so some data cannot be stored on the blockchain directly. According to this issue, we consider two situations.

\subsection{A SSE-using-BC scheme for the Lightweight Data}
In this case, there are four participants: Data owner, user $U^\prime$, user $Q$ and miners. The data owner will upload $n$ (small integer) lightweight documents onto the blockchain. The miners will collect these documents and store them on the blockchain. The user $U^\prime$ wants to retrieve some documents that he is interested in. The user $Q$ will return the documents to the user $U^\prime$. Here, the data owner and user $U^\prime$ can be same, if they are two different person, they will share the secret key.

Let $F_1, F_2, F_3$ be three pseudorandom functions, where $ F_{1}: \{0,1\}^k\times\{0,1\}^*\rightarrow \{0,1\}^k$, $ F_{2}: \{0,1\}^k\times\{0,1\}^*\rightarrow \{0,1\}^{k}$, $ F_{3}: \{0,1\}^k\times\{0,1\}^*\rightarrow \{0,1\}^{k}$, $\varepsilon=(\varepsilon.Enc, \varepsilon.Dec)$ be an $IND-PCPA$ secure symmetric encryption scheme, $\epsilon=(\epsilon.Enc,\epsilon.Dec)$ be a deterministic symmetric encryption algorithm and $H$ be a hash function of the merkle-Damg{\aa}rd type which maps $l\cdot p-$th strings to $p-$th strings, where $p$ and $l$ are fixed integers. Suppose each transaction identifier of length $p$ can be computed by the transaction itself.

The concrete construction is composed of five steps.
\begin{itemize}
  \item $\mathbf{Gen}$: It takes the security parameter $k$ as input, and outputs a secret key array $\mathbb{K}=(K_{1},K_{2})$, where $K_{ i}\leftarrow \{0,1\}^k (1\leq i\leq 2)$.

  \item $\mathbf{Enc}$: Firstly, the data owner will use the private key $K_1$ to transform the documents $\mathbb{D}=\{D_{1},D_{2},\ldots,D_{ n}\}$ into $\mathbb{C}=\{C_{1},C_{2},\ldots,C_{ n}\}$:
      \begin{center}
  $C_{ i}=\varepsilon.Enc(K_{ 1},D_{ i})(1\leq i\leq n)$,
   \end{center}
       To store the ciphertext $C_{ i} (1\leq i \leq n)$, he finds $n$ unredeemed transactions $TX_{{D0i}} (i=1,\dots,n) $ of value $d_{i} \$$ whose receiver is data owner, and builds the following transactions $TX_{{D_i}} (i=1,\dots,n)$:
      \begin{itemize}
        \item He computes the body of transaction $TX_{{D_i}}$ by using the $TX_{{D0i}}$ as input.
        \item He embeds $C_{ i} (1\leq i \leq n)$ into the out-script of transaction $TX_{{D_i}}$. After signing it, he broadcasts it to the blockchain.
        \item The miners will collect these transactions. If the transaction $TX_{{D_i}} (i=1,\dots,n)$ appears on the ledger, the data owner records its corresponding transaction identifier $TXID_{{D_i}}$.
      \end{itemize}

      Let $\mathcal {W}=\{w_{1},w_{2},\ldots,w_{ m}\}$ denote a dictionary composed of keywords that appear in the $\mathbb{D}$, where $m$ means the number of keywords. For each keyword $w_{i}(1\leq i\leq m)$, he selects the set $DB(w_{i})$ that is initialized to be empty. If the document $D_{{i_{ j}}}$ contains keyword $w_{i}$, he then puts $TXID_{{D_{{i_{ j}}}}}$ into $DB(w_{i})$. Suppose $\Delta_i$ represents the number of elements in $DB(w_{i})$, and let
      $\Delta=\max\limits_{1\leq i\leq m}\{\Delta_i\}$.
       If $\Delta_i<\Delta$, the data owner pads the remaining $\Delta-\Delta_i$ elements with $0^p$, such that the size of $DB(w_{i})$ is equal to $\Delta$.

Now, the data generates the following transaction which can be seen the documents' Index.
\begin{itemize}
  \item For each keyword $w_i\in \mathcal{W}$, he firstly computes:
  \begin{center}
   $t_{w_i}=F_1(K_2,w_i)$, $l_{w_i}=F_2(K_2,w_i)$,\\
   $e_{w_i}=\epsilon.Enc(l_{w_i},DB(w_i))$, $k_{w_i}=F_3(K_2, w_i)$,\\
   $h_{w_i}=H(k_{w_i},C_{i1}\|\ldots \|C_{in})$.
  \end{center}
   He then puts $(t_{w_i},e_{w_i},h_{w_i})$ into the array $\mathcal I$ in the order of dictionary.
  \item The data owner finds unredeem transaction $TX_0$ of value $d_0 \$$, whose receiver is himself.
  \item He uses the transaction transaction $TX_0$ to compute the body of transaction $Inx$. He then embeds $I$ in the out-script of $Inx$ and signs it. Next, he broadcasts it on the blockchain.
  \item If the transaction $Inx$ appears on the blockchain, he broadcasts its identifier $TX_{Inx}$ to others.
\end{itemize}

\item $\mathbf{Trpdr}$: Let $\phi(,)$ be a function that composed of decryption algorithm and verification algorithm. when inputting $x,y$, this function firstly uses $y$ to locate the corresponding transaction $q$, and uses $x$ to decrypt the information embedded in $q$. Suppose the result is $\alpha,\beta$. It then takes the $\alpha,\beta,x$ as input and verifies if $\beta\stackrel{?}{=}H(x,\alpha)$. If it holds, this function outputs $\alpha,1$.

    Now, the user $U^\prime$ wants to find the segment of documents that contain keyword $w$. As shown in figure 6, he will create the transaction $ask$, the concrete process is as follows:
     \begin{itemize}
  \item Appoint a person to search, suppose it is the user $Q$.
  \item Find an unredeemed transaction $T_q$ of value  $d_t \$$, whose receiver is the user $U^\prime$. He then uses $T_q$ to compute the body of $ask$.
  \item Compute $t_w=F_1(K_2,w)$, $l_w=F_2(K_2,w)$ and $k_w=F_3(K_2, w)$.
  \item Both $U^\prime$ and $Q$ use transaction $ask$ to compute the body of transaction $Fuse$ with time lock set to some time $t$ in the future. $Q$ signs the transaction $Fuse$ and sends it to $U^\prime$. The user $U^\prime$ puts his signature on it.
  \item The user $U^\prime$ puts $((t_w,l_w,k_w),$ $TX_{Inx})$ into the out-script of $ask$.
  \item After signing the transaction $ask$, he broadcasts it.
  \item If the transaction $ask$ does not appear on the blockchain until time $t-max_{U^\prime}$, where $max_{U^\prime}$ is the maximal possible delay of including it in the blockchain, the user $U^\prime$ immediately redeems the transaction $T_q$ by using his private key and quits the protocol.
\end{itemize}
\begin{figure}[ht]
 \centering
\includegraphics[width=7cm,height=7cm]{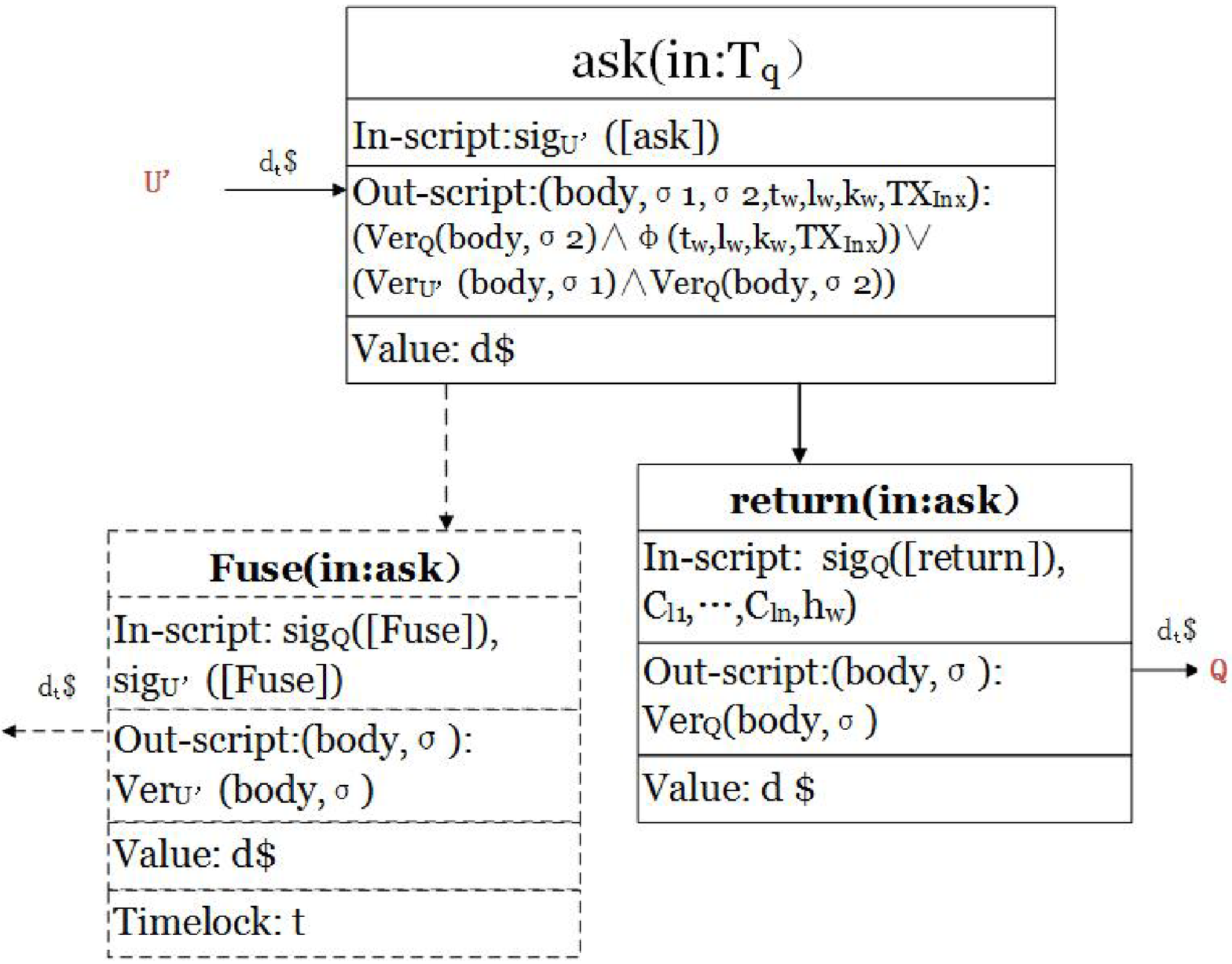}
\caption{get the documents that contain keyword $w$}
\end{figure}
   \item $\mathbf{Search}$: If the user $Q$ wants to get the money from the transaction $ask$, he must compute the result by running the function $\phi(,)$ and creates the transaction $return$.
       \begin{itemize}
       \item Compute the body of $return$ transaction by using $ask$ transaction as input.
         \item Run the function $\phi((t_w,l_w,k_w),TX_{Inx})$. Namely, he firstly uses $TX_{Inx}$ to read the information $I$ embedded in the transaction $Inx$. Then, he finds $(e_w,h_w)$ from $I$ by using $t_w$. Next, the user $Q$ decrypts $e_w$: $DB(w)=\epsilon.Dec(l_{w},e_w)$. Suppose $DB(w)=\{TXID_{D_{l_1}},TXID_{D_{l_2}},$ $\dots,TXID_{D_{l_n}}\}$, where $TXID_{D_{l_j}} (j=1,\ldots,n)$ is the identifier of transaction $TX_{D_{l_j}} (j=1,\ldots,n)$. The user $Q$ reads the document ciphertext $C_{l_j}$ from transaction $TX_{D_{l_j}}$.
             \item Embed the $(\{C_{l_j}\}, h_w)$ into the out-script of transaction $return$.
             \item Broadcast transaction $return$ to the blockchain with his signature on it.
       \end{itemize}
   \item $\mathbf{Dec}$: If the transaction $return$ appears on the blockchain, the user $Q$ can get the $\{C_{l_j}\}$ from it. Then he uses private key to compute $D_{l_j}=\varepsilon.Dec(K_1,C_{l_j})$ $ (1\leq j\leq n)$. If within time $t$ the transaction $return$
does not appear on the blockchain, the user $Q$ broadcasts transaction $Fuse$ and get his money back.

\end{itemize}
\subsection{A SSE-using-BC scheme for the Big Data}
It indirectly indicates that the size of the data and the index is not large in the above case. However, there always exists big data in reality, which will be rejected in the blockchain system since its size is larger than the maximum value of the transaction that the system allows to be. Therefore, we needs to process the data before storing them on the blockchain. In this section, we give a feasible scheme to solve this issue.

Before the scheme beginning, the user will choose the following functions. Let $F_1, F_2, F_3$ be three pseudorandom functions, where $ F_{1}: \{0,1\}^k\times\{0,1\}^*\rightarrow \{0,1\}^k$, $ F_{2}: \{0,1\}^k\times\{0,1\}^*\rightarrow \{0,1\}^{k}$, $ F_{3}: \{0,1\}^k\times\{0,1\}^*\rightarrow \{0,1\}^{k}$, $\varepsilon=(\varepsilon.Enc, \varepsilon.Dec)$ be an $IND-PCPA$ secure symmetric encryption scheme, $\epsilon=(\epsilon.Enc,\epsilon.Dec)$ be a deterministic symmetric encryption algorithm and $H$ be a keyed hash function $H:\{0,1\}^k\times \{0,1\}^*\rightarrow \{0,1\}^k$. Suppose the length of the transaction cannot exceed $\iota$, and the length of identifier of the transaction is $p$.

The concrete scheme is composed of five steps which is listed as follows.

\begin{itemize}
  \item $\mathbf{Gen}$: It takes the security parameter $k$ as input, and outputs a secret key array $\mathbb{K}=(K_{1},K_{2})$, where $K_{ i}\leftarrow \{0,1\}^k (1\leq i\leq 2)$.

  \item $\mathbf{Enc}$: Firstly, the data owner will use the private key $K_1$ to transform the documents $\mathbb{D}=\{D_{1},D_{2},\ldots,D_{ n}\}$ into $\mathbb{C}=\{C_{1},C_{2},\ldots,C_{ n}\}$:
  \begin{center}
  $C_{ i}=\varepsilon.Enc(K_{ 1},D_{ i})(1\leq i\leq n)$.
   \end{center}
       \begin{itemize}
        \item If $|C_i|>\iota$, the data owner will divide $C_i$ into $s$ blocks $\tilde{C}_{i1},\tilde{C}_{i2},\ldots,\tilde{C}_{is}$, such that $|\tilde{C}_{ij}|+p\leq\iota, \forall j\in \{1,\ldots,s\}$, where $s=\lceil\frac{|C_i|}{\iota-p}\rceil$. To store $C_i$, he finds $s$ unredeemed transactions $TX_{{\tilde{D}0_{ik}}} (k=1,\dots,s)$ of value $d_{ik}\$$, whose receiver is the data owner. He then builds transactions $TX_{{\tilde{D}_{ik}}} (k=1,\dots,s)$ shown as follows:
              \begin{itemize}
               \item When $k=1$:
                  \begin{itemize}
                    \item Compute the body of transaction $TX_{{\tilde{D}_{i1}}}$ by using the $TX_{{\tilde{D}0_{i1}}}$ as input.
                    \item Embed $\tilde{C}_{i1}\|0^p$ into the out-script of transaction $TX_{{\tilde{D}_{i1}}}$. After singing it, he broadcasts it to the blockchain.
                    \item If the $TX_{{\tilde{D}_{i1}}}$ appears on the blockchain, he records its identifier $TXID_{\tilde{D}_{i1}}$.
                    \end{itemize}
               \item For $2\leq k\leq s$:
                  \begin{itemize}
                   \item Compute the body of transaction $TX_{{\tilde{D}_{ik}}}$ by using the $TX_{{\tilde{D}0_{ik}}}$ as input.
                   \item Embed $\tilde{C}_{ik}\|TXID_{\tilde{D}_{i(k-1)}}$ into the out-script of transaction $TX_{{\tilde{D}_{ik}}}$. After singing it, he broadcasts it to the blockchain.
                   \item If the transaction $TX_{{\tilde{D}_{ik}}}$ appears on the ledger, he records its corresponding transaction identifier $TXID_{{\tilde{D}_{ik}}}$.
                   \end{itemize}
             \end{itemize}
     \item If $|C_i|\leq\iota (1\leq i \leq n)$, he then finds an unredeemed transaction $TX_{{D0i}}$ of value $d_{i}\$$, whose receiver is data owner. He next builds the following transaction $TX_{{D_i}}$:
           \begin{itemize}
            \item Computes the body of transaction $TX_{{D_i}}$ by using the $TX_{{D0i}}$ as input.
           \item Embeds $C_{ i}$ into the out-script of transaction $TX_{{Di}}$. After signing it, he broadcasts it to the nodes.
          \item The miners will collect it. If the transaction $TX_{{D_i}}$ appears on the ledger, he records its corresponding transaction identifier $TXID_{{D_i}}$.
           \end{itemize}
        \end{itemize}
      Let $\mathcal {W}=\{w_{1},w_{2},\ldots,w_{m}\}$ denote the dictionary composed of keywords that appear in the $\mathbb{D}$, where $m$ is the number of keywords. For each keyword $w_{i}(1\leq i\leq m)$, he selects the set $DB(w_{i})$ that is initialized to be empty.
      \begin{itemize}
        \item If $w_i\in D_{i_j}\bigwedge|C_{i_j}|>\iota$, he puts $TXID_{\tilde{D}_{i_j}s}$ into the set $DB(w_{i})$.
        \item If $w_i\in D_{i_j}\bigwedge|C_{i_j}|\leq\iota$, he puts $TXID_{D_{i_j}}$ into the set $DB(w_{i})$.
      \end{itemize}
 Suppose $\Delta_i$ represents the number of elements in $DB(w_{i})$, let $\Delta=\max\limits_{1\leq i\leq m}\{\Delta_i\}$. If $\Delta_i<\Delta$, the data owner pads $0^p$ for the remaining $\Delta-\Delta_i$ elements in $DB(w_{i})$.

Now, the data owner generates the following transaction for each keyword $w_i\in \mathcal{W}$. Namely, he firstly computes:
     \begin{center}
   $t_{w_i}=F_1(K_2,w_i)$, $l_{w_i}=F_2(K_2,w_i)$,\\
   $e_{w_i}=\epsilon.Enc(l_{w_i},DB(w_i))$, $k_{w_i}=F_3(K_2, w_i)$,\\
   $h_{w_i}=H(k_{w_i},C_{i1}\|\ldots \|C_{in})$.
     \end{center}
       \begin{itemize}
       \item In order to create transaction $TX_{Iw_1}$ for keyword $w_1$, the data owner finds an unredeem transaction $TX_{Iw10}$ of value $d_{w10} \$$, whose receiver is himself.
       \item He uses the transaction transaction $TX_{Iw10}$ to compute the body of transaction $TX_{Iw_1}$. He then computes
       \begin{center}
       $K_{11}=F_2(K_2,0^p)$, $r_1=\epsilon.Enc(K_{11},t_{w_1}\|e_{w_1}\|h_{w_1}\|0^p)$,
       \end{center}
      and embeds $r_1$ in the out-script of $TX_{Iw_1}$ and signs it. At last, he broadcasts it on the blockchain.
       \item If the transaction $TX_{Iw_1}$ appears on the blockchain, he records its identifier $TI_{w_1}$, which can be seen as a pointer to $TX_{Iw_1}$.
      \item If transaction $TX_{Iw_1}$ does not appear on the blockchain, the data owner can redeem transaction $TX_{Iw10}$ quickly and quits the protocol.
       \end{itemize}
     For $w_j (2\leq j\leq m)$, the data owner creates transaction $TX_{Iw_j}$ which is shown as follows:
       \begin{itemize}
      \item The data owner finds an unredeem transaction $TX_{Ij0}$ of value $d_{j0} \$$, whose receiver is himself.
      \item He uses the transaction transaction $TX_{Ij0}$ to compute the body of transaction $TX_{Iw_j}$. He computes
       \begin{center}
       $K_{11}=F_2(K_2,0^p)$, $r_j=\epsilon.Enc(K_{11},t_{w_j}\|e_{w_j}\|h_{w_j}\|TI_{w_{j-1}})$,
       \end{center}
  then embeds $r_j$ in the out-script of $TX_{Iw_j}$ and signs it. At last, he broadcasts it on the blockchain.
      \item If the transaction $TX_{Iw_j}$ appears on the blockchain, he records its identifier $TI_{w_j}$, which can be seen as a pointer to $TX_{Iw_j}$.
      \item If transaction $TX_{Iw_j}$ does not appear on the blockchain, the data owner can redeem transaction $TX_{Ij0}$ quickly and quits the protocol.
      \end{itemize}
  At last, the data owner broadcasts $TI_{w_m}$ to others who have the permission of search.

\item $\mathbf{Trpdr}$: Let $\phi(\cdot,\cdot)$ be a function that composed of decryption algorithm and verification algorithm. when inputting $x,y$, this function firstly uses $y$ to locate the corresponding transaction $q$, and uses $x$ to decrypt the information embedded in $q$. Suppose the result is $\alpha,\beta$. It then takes $\alpha,\beta,x$ as input and verifies if $\beta\stackrel{?}{=}H(x,\alpha)$. If it holds, this function outputs $\alpha,1$.

    Now, the user $U^\prime$ wants to find the segment of documents that contain keyword $w$. As shown in figure 7, he will create the transaction $ask$, the concrete process is as follows:
     \begin{itemize}
  \item Appoint a person to search, suppose it is the user $Q$.
  \item Find an unredeemed transaction $T_q$ of value $d_t \$$, whose receiver is the user $U^\prime$. He then uses $T_q$ to compute the body of $ask$.
  \item Compute $t_w=F_1(K_2,w)$, $l_w=F_2(K_2,w)$, $K_{11}=F_2(K_2,0^p)$ and $k_w=F_3(K_2, w)$.
  \item Both $U^\prime$ and $Q$ use transaction $ask$ to compute the body of $Fuse$ transaction with time lock set to some time $t$ in the future. $Q$ sends $Fuse$ to $U^\prime$ with his signature on it. Then, the user $U^\prime$ can put his signature on it.
  \item The user $U^\prime$ embeds $((t_w,l_w,k_w),$ $K_{11},$ $TI_{w_m})$ into the out-script of $ask$.
  \item After signing the transaction $ask$, he broadcasts it.
  \item If the transaction $ask$ does not appear on the blockchain until time $t-max_{U^\prime}$, where $max_{U^\prime}$ is the maximal possible delay of including it in the blockchain, the user $U^\prime$ immediately redeems the transaction $T_q$ by using his private key and quits the protocol.
\end{itemize}
\begin{figure}[ht]
 \centering
\includegraphics[width=7cm,height=5.5cm]{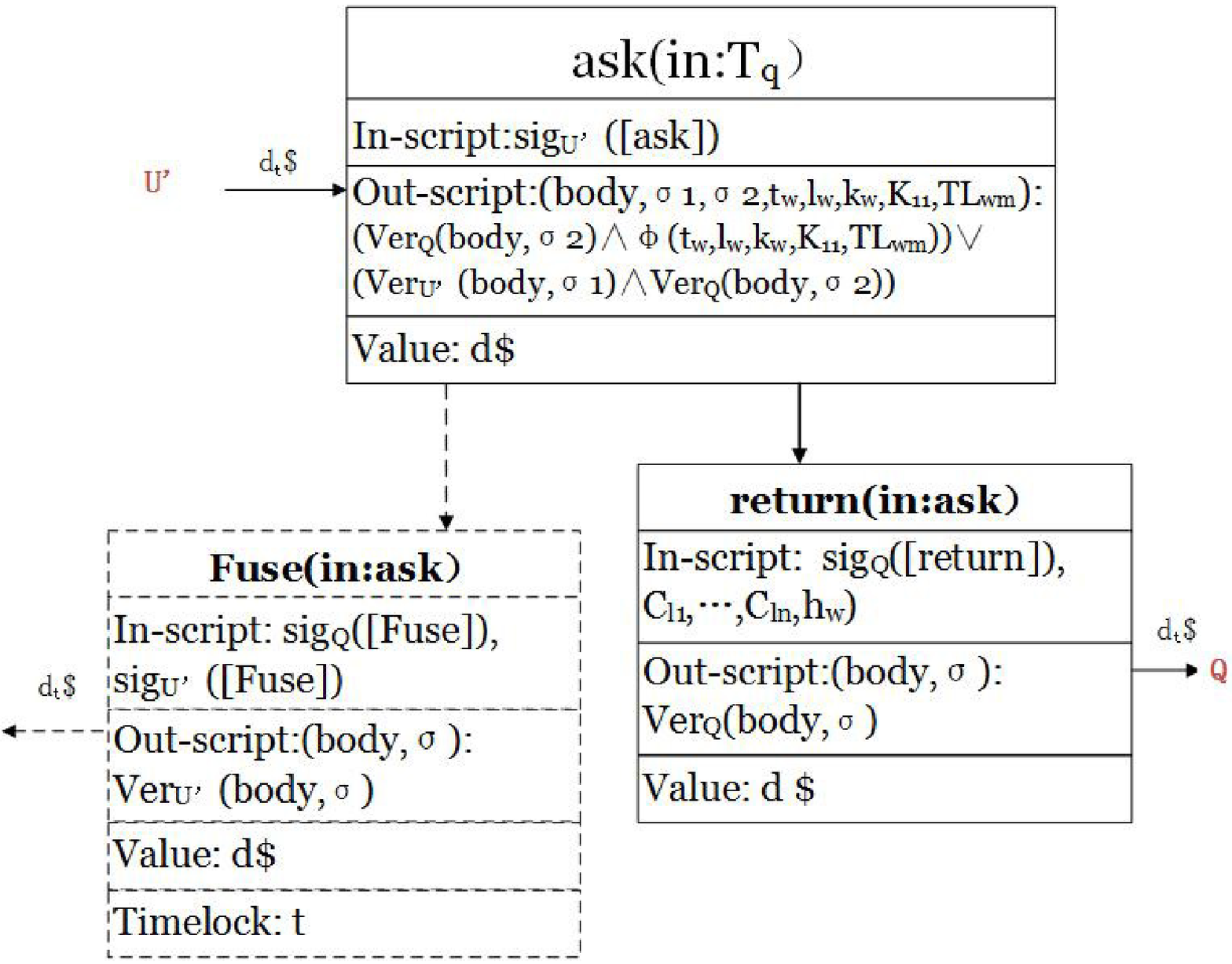}
\caption{return the documents that contain keyword $w$}
\end{figure}
   \item $\mathbf{Search}$: As shown in figure 7, if the user $Q$ wants to claim the money from the transaction $ask$, he must do:
       \begin{itemize}
       \item Compute the body of transaction $return$ transaction by using transaction $ask$ as input.
         \item Run the function $\phi(t_w,l_w,k_w,K_{11},TI_{w_m})$. Firstly, he uses $TI_{w_m}$ to read the information $r_m$ embedded in the transaction $TX_{Iw_m}$, and computes
              $t_{w_m}\|e_{w_m}\|h_{w_m}\|TI_{w_{m-1}}=\epsilon.Dec(K_{11},r_m)$.
              \begin{itemize}
                \item If $t_{w_m}=t_w$, he continues to do $DB_{w_m}=\epsilon.Dec(l_{w_m},e_{w_m})$. Suppose $DB_{w_m}=\{TXID_{D_{m_1}},$ $\ldots,TXID_{D_{m_\Delta}}\}$. He reads ciphertext $C_i$ by using $TXID_{D_{m_i}}(1\leq i\leq \Delta)$:
                    \begin{itemize}
                      \item If the information embedded in the transaction $TX_{D_{m_i}}$ is $C_{m_i}$, he records this value.
                      \item If the information embedded in the transaction $TX_{D_{m_i}}$ is $\tilde{C}_{{m_i}s}\|$ $TXID_{\tilde{D}_{{m_i}(s-1)}}$, he firstly records $\tilde{C}_{{m_i}s}$, and continues to use the transaction identifier $TXID_{\tilde{D}_{{m_i}j}}$ to
                          read the information $\tilde{C}_{{m_i}j} (j=s-1,\ldots,1)$ embedded in the transaction $TX_{\tilde{D}_{{m_i}j}}(j=s-1,\ldots,1)$. Then, he computes $C_{m_i}=\tilde{C}_{{m_i}1}\|\ldots\|\tilde{C}_{{m_i}s}$.
                    \end{itemize}
                \item If $t_{w_m}\neq t_w$, he uses transaction identifier $TI_{w_{m-j}}$ to read information $r_{m-j} (j=1,\ldots,m-1)$ embedded in the transaction $TX_{Iw_{m-j}}(j=1,\ldots,m-1)$ until it stops. That is to say, he does:
                    \begin{itemize}
                      \item Decrypt $t_{w_{m-j}}\|e_{w_{m-j}}\|h_{w_{m-j}}\|TI_{w_{{m-j}-1}}=\epsilon.Enc(K_{11},r_{m-j})$,
                      \item Verify $t_{w_{m-j}}\stackrel{?}{=}t_w$. If it holds, he decrypts $DB_{w_{m-j}}$ and gets $\{C_{l1},\ldots,C_{ln}\}$ by using the above method. If it does not hold, he continues to read the information $r_{m-j-1}$ embedded in the transaction $TX_{Iw_{m-j-1}}$.
                   \end{itemize}
                \end{itemize}
             \item Embed the $(\{C_{l_1},\ldots,C_{l_n}\}, h_w)$ into the out-script of transaction $return$.
             \item Sign the transaction $return$ and broadcast it.
       \end{itemize}
   \item $\mathbf{Dec}$: If the transaction $return$ appears on the blockchain, the user $Q$ can get the $\{C_{l_j}\}$ from it. Then he uses private key to compute $D_{l_j}=\varepsilon.Dec(K_1,C_{l_j})$ $ (1\leq j\leq n)$. If within time $t$ the transaction $return$
does not appear on the blockchain, the user $Q$ broadcasts transaction $Fuse$ and get his money back.
\end{itemize}
\section{Security and Performance Analysis}
Because the search process of the second scheme shown in section $4.2$ is similar to that in section $4.1$ and it can run on the current Bitcoin test chain, therefore, in this section, we only give the performance and security analyses for the second scheme.
\begin{figure}
\centering
\begin{minipage}[c]{0.5\textwidth}
\centering
\includegraphics[height=4.5cm,width=7.5cm]{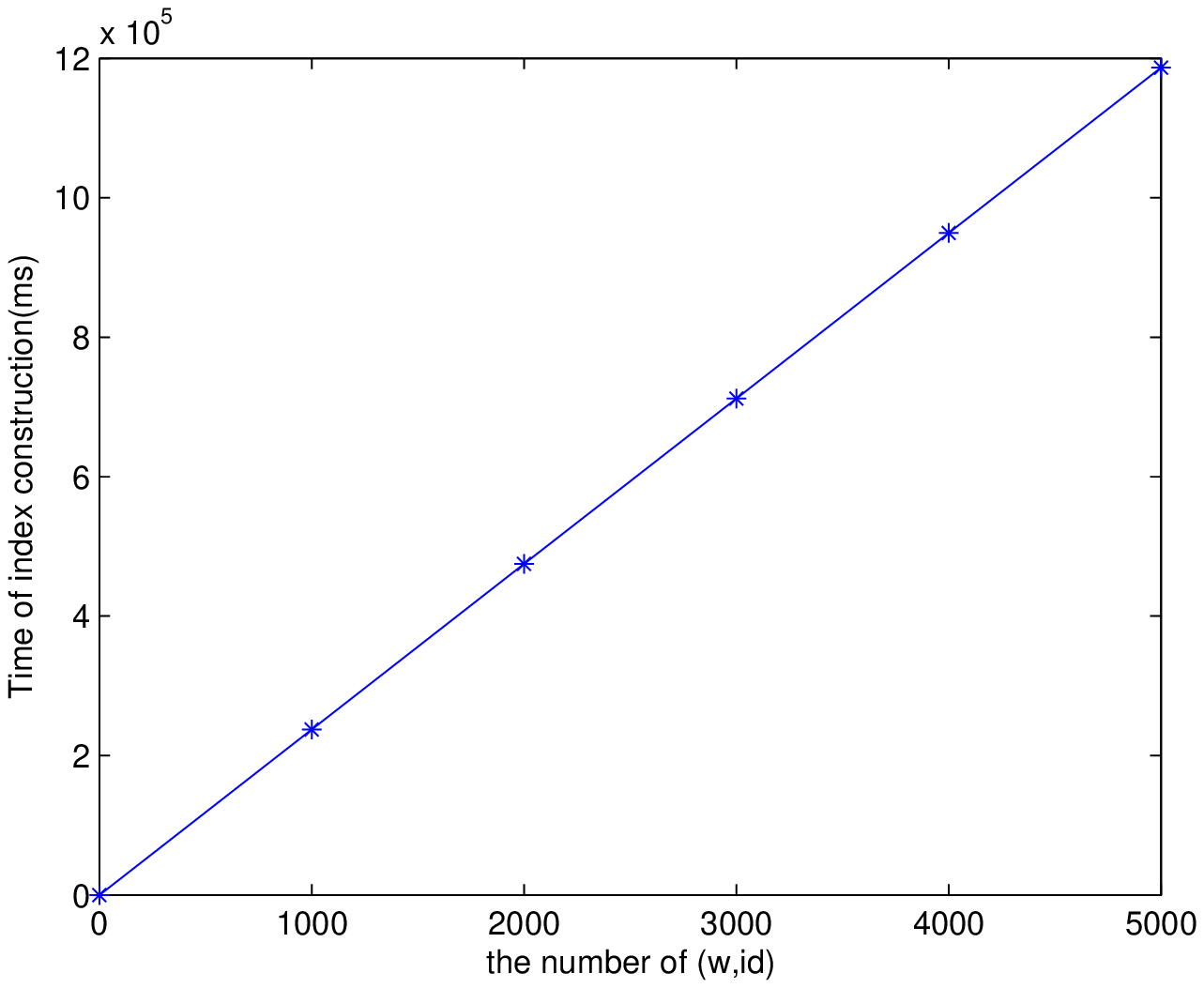}
\end{minipage}%
\begin{minipage}[c]{0.5\textwidth}
\centering
\includegraphics[height=4.5cm,width=7.5cm]{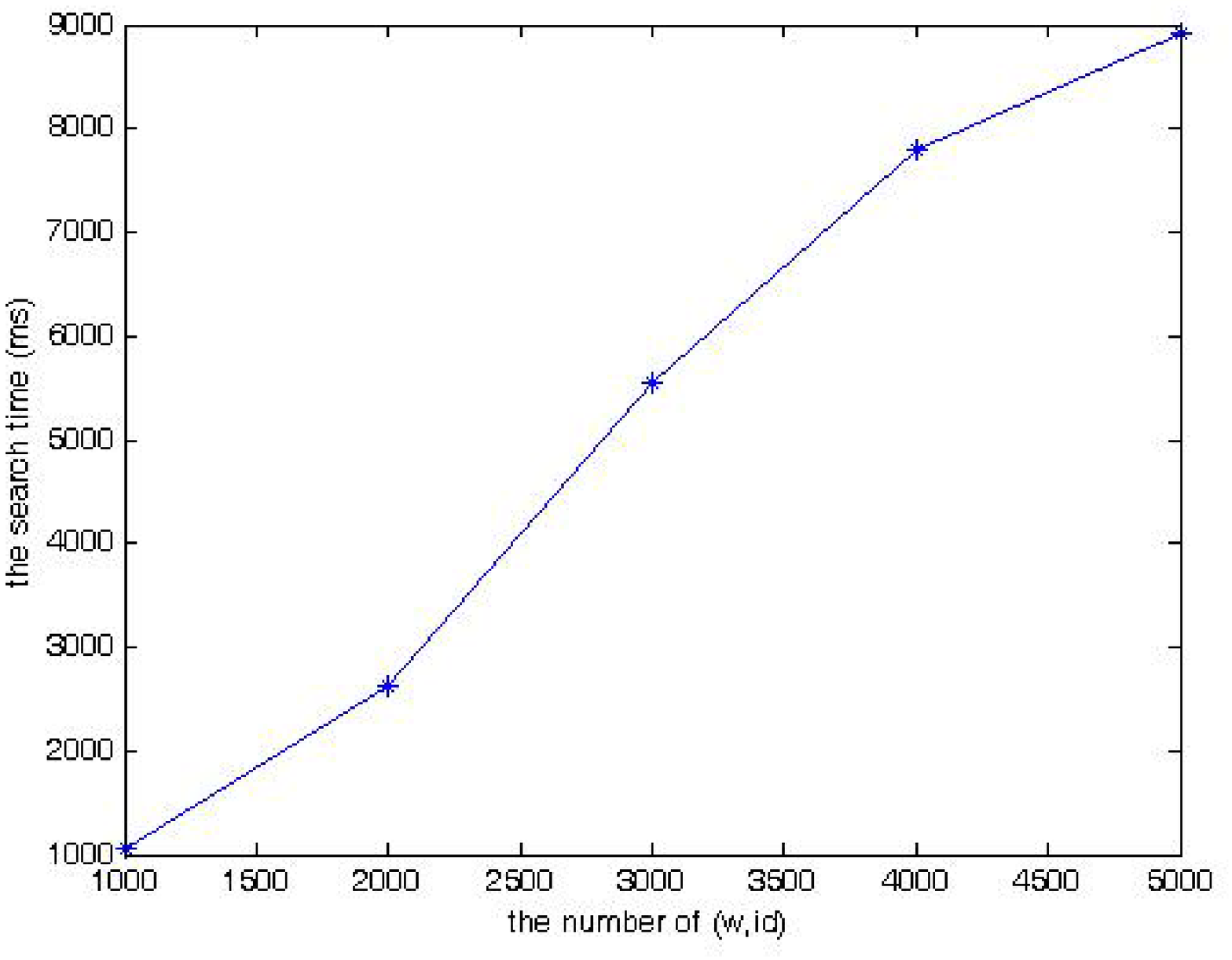}
\end{minipage}
\caption{The time of index construction and search}
\end{figure}

\subsection{Performance}
 We evaluate the performance of our second scheme on the computer with Intel Core $E3-1241 V3$ processor, $32$GB memory. Our code was complied without any optimization in Ubuntu $16.04$ LTS. The version number of Bitcoin that we use is $130200$. In this system, the protocol version number is $70015$, and the version of wallet is $130000$. Each block is produced about $39ms$.

 It is well known that it only supports $80$ bytes in the OP$\_$RETURN script of a transaction, if we directly put $r_i=\epsilon.Enc(K_{11},t_{w_i}\|e_{w_i}\|h_{w_i}\|TI_{w_{i-1}})$ into the out-script, it will result in the transaction is error. Therefore, in order to make the transaction operate normally, we need to divide $(t_{w_i}\|e_{w_i}\|h_{w_i}\|TI_{w_{i-1}})$ into three parts and put them into three transactions without encryption. Namely, for keyword $w_1$, the first part $(h_{w_1}\|0^p)$ will be embedded into transaction $T_{w11}$. After transaction $T_{w11}$ appearing on the blockchain, we put $e_{w_1}$ concatenated with the transaction identifier $TX_{w11}$ of $T_{w11}$ into the second transaction $T_{w12}$. If the transaction $T_{w12}$ appears on the blockchain, we put $t_{w_1}$ concatenated with the transaction identifier $TX_{w12}$ of $T_{w12}$ into the following transaction $T_{w13}$. For keyword $w_i (2\leq i\leq m)$, we put $(h_{w_i}\|TX_{w13})$ into the out-script of transaction $T_{wi1}$, where $TX_{w13}$ denotes the transaction identifier of $T_{w13}$. After transaction $T_{wi1}$ appearing on the blockchain, the $e_{w_i}$ concatenated with the transaction identifier $TX_{wi1}$ of $T_{wi1}$ is put into the second transaction $T_{wi2}$. If the transaction $T_{wi2}$ appears on the blockchain, we put $t_{w_i}$ concatenated with the transaction identifier $TX_{wi2}$ of $T_{wi2}$ into the following transaction $T_{wi3}$. Here, the reason we do not encrypt them is that we want to meet the requirement of Bitcoin test chain, and prove our scheme is feasible. In fact, in order to meet the security requirement, we can modify the parameters to make the transaction support more bytes.

 The parameters related to the time of index construction are the number of documents and keywords. As shown in Fig.8, we can see that the time of index construction is linear into the number of pair $(w,id)$, where $w$ denotes the keyword and $id$ denotes a document's identifier. Because these transactions connect one by one, when search, we need to read from the last transaction until $t_{\tilde{w}}=t_w$ happening. As shown in Fig.9, it shows that the search time is sub-linear in the number of pair $(w,id)$ when we find the documents that only contain keyword $w$ on different scales of data.

\subsection{Security Analysis}
Though the schemes shown in the section 4 are different, their ideas are similar. Therefore, their security proof is similar. In this section, we only give the security proof for the first scheme in the section $4.1$, the proof for the second scheme can be derived similarly.
\begin{theorem}
If {$F_1,F_2,F_3$} are pseudorandom functions, $H$ is a collision resistant hash function, and $\varepsilon=(Enc,Dec)$ is PCPA-secure symmetric encryption scheme, then the scheme we present in section $4.1$ is adaptively IND-CKA2 secure.
\end{theorem}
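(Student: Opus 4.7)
The plan is to give a simulation-based proof in the Curtmola-style real/ideal paradigm defined in Figures 4--5. That is, I will exhibit a p.p.t.\ simulator $\mathbb{S}=(\mathcal{S}_0,\mathcal{S}_1,\ldots,\mathcal{S}_q)$ that, given only the leakage $L(\mathcal{D})$ and $L(\mathcal{D},w_1,\ldots,w_i,TX_{Inx})$ (i.e.\ the size pattern, access pattern, and search pattern), produces a view statistically close to the real transcript, and then move from the real game to the simulated game through a sequence of hybrids, each justified by one cryptographic assumption.

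For the setup, $\mathcal{S}_0$ knows $n$, each $|D_i|$, and $|\mathcal{W}|=m$ from $L(\mathcal{D})$. It samples $n$ uniformly random strings $\widetilde{C}_i$ of length $|D_i|$, embeds them in well-formed transactions $\widetilde{T}_i$ computed from freshly chosen unredeemed inputs in $\{UTXO\}$, and records the simulated identifiers $\widetilde{TXID}_{D_i}$. For the index it samples, for every $i\in[m]$, fresh uniform strings $\tilde t_{w_i},\tilde l_{w_i},\tilde k_{w_i}\leftarrow\{0,1\}^k$, computes $\tilde e_{w_i}=\epsilon.\mathbf{Enc}(\tilde l_{w_i},0^{p\Delta})$ (an encryption of a zero list of the maximal padded length), samples $\tilde h_{w_i}\leftarrow\{0,1\}^k$, and places the triples $(\tilde t_{w_i},\tilde e_{w_i},\tilde h_{w_i})$ in $\widetilde{Inx}$ in lexicographic order. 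On the $i$-th adaptive query, $\mathcal{S}_i$ reads the search pattern: if $w_i$ repeats some earlier $w_j$, it reuses $(\tilde t_{w_j},\tilde l_{w_j},\tilde k_{w_j})$; otherwise the access pattern gives the set $DB(w_i)$ of document identifiers, and $\mathcal{S}_i$ \emph{reprograms} the previously placed $\tilde e_{w_i}$ as $\epsilon.\mathbf{Enc}(\tilde l_{w_i},DB(w_i))$ and, using the documents $\{C_{i1},\ldots,C_{in}\}$ that $\mathcal{A}$ now supplies for the $h$-check, reprograms $\tilde h_{w_i}$ via lazy sampling to match the value the adversary expects. The corresponding transactions $\tilde t_{w_i}$ and $\tilde s_i$ (carrying $(\{C_{l_j}\},\tilde h_{w_i})$) are produced using standard bitcoin script simulation plus simulated signatures, with $Fuse$ constructed symmetrically.

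The indistinguishability argument proceeds via four hybrids. (H1) replaces every invocation of $F_1,F_2,F_3$ on keyword inputs by a truly random function; a distinguisher gives a PRF adversary against whichever of the three PRFs it queries. (H2) replaces each $\varepsilon.\mathbf{Enc}(K_1,D_i)$ by $\varepsilon.\mathbf{Enc}(K_1,0^{|D_i|})$; a standard hybrid over $n$ plaintexts reduces to PCPA-security of $\varepsilon$ because $K_1$ is never revealed to the adversary. (H3) replaces each $e_{w_i}=\epsilon.\mathbf{Enc}(l_{w_i},DB(w_i))$ by $\epsilon.\mathbf{Enc}(l_{w_i},0^{p\Delta})$ at setup, then back to $\epsilon.\mathbf{Enc}(l_{w_i},DB(w_i))$ only when $w_i$ is actually queried; since the $l_{w_i}$ are now uniform and independent (from H1) and revealed only after the query, this reduces to PCPA-security of $\epsilon$. (H4) replaces $h_{w_i}=H(k_{w_i},C_{i1}\|\cdots\|C_{in})$ by a fresh random $\tilde h_{w_i}$ at setup; since $k_{w_i}$ is uniform and independent by (H1) and only revealed at query time, collision resistance of $H$ combined with the unpredictability of $k_{w_i}$ lets the simulator program $\tilde h_{w_i}$ to the correct value at query time without the adversary detecting inconsistency. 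The final hybrid is distributed identically to $\mathbf{Ideal}^\Pi_{\mathbb{A},\mathbb{S}}(k)$.

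The main obstacle will be the adaptive consistency in hybrid (H4): the simulator must commit to $\tilde h_{w_i}$ in the index transaction $\widetilde{Inx}$ before the adversary chooses which keyword it will later query and before $\mathcal{A}$ supplies the ciphertexts $\{C_{i1},\ldots,C_{in}\}$ used in the $h$-check of the \textbf{Search} function $\phi$. Resolving this is exactly where the third security requirement of Section 3.2 bites (a cheating $Q$ cannot claim $d_t$ without the right documents): I will argue that, because $k_{w_i}=F_3(K_2,w_i)$ is computationally hidden until $w_i$ is queried, the adversary's advantage in forcing the simulator to equivocate on $\tilde h_{w_i}$ is bounded by the sum of the PRF advantage against $F_3$ and the collision-finding advantage against $H$, both negligible. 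The transaction-layer pieces ($Fuse$, the refund transaction $p$, signatures on $ask/return$) are simulated via standard signature/script simulation and do not contribute cryptographic content beyond the EUF-CMA assumption on the underlying bitcoin signature scheme, which we fold into the overall negligible gap.
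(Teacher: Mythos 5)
Your proposal follows the same simulation paradigm as the paper and builds essentially the same simulator: random strings of the right lengths for the ciphertext transactions (justified by PCPA security of $\varepsilon$), uniform $t_{w}^*,l_{w}^*,k_{w}^*$ in place of the PRF outputs, $e_{w}^*$ as an encryption under a fresh key, and $h_{w}^*$ tied to the collision resistance of $H$ for the claim that a cheating $Q$ cannot redeem $ask$. The difference is one of organization and rigor: the paper argues component-wise ("$X^*$ is indistinguishable from $X$ because $F_i$ is a PRF / $\varepsilon$ is PCPA-secure") without ever chaining these into a single reduction, whereas you arrange the argument as an explicit sequence of hybrids H1--H4, each charged to one assumption. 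That is a strictly cleaner presentation of the same idea and is the form the argument should take.

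The place where you diverge from the paper is also the place where your argument has a genuine gap. You correctly identify the adaptive-consistency problem: $\widetilde{Inx}$, containing $\tilde h_{w_i}$, must be committed to the blockchain at setup, before the access pattern for $w_i$ is leaked and before $\mathcal{A}$ supplies the ciphertexts used in the $\phi$ check. Your resolution --- ``reprogram $\tilde h_{w_i}$ via lazy sampling to match the value the adversary expects'' --- is not available under the stated hypotheses. Reprogramming is a random-oracle technique; $H$ is only assumed collision resistant, and collision resistance neither lets you choose $\tilde k_{w_i}$ after the fact so that $H(\tilde k_{w_i},C_{i1}\|\cdots\|C_{in})$ equals a previously published random $\tilde h_{w_i}$ (that is inversion, not collision finding), nor guarantees that $H(k,\cdot)$ with a hidden key has pseudorandom outputs (which is what you need for the unqueried entries to look random). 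To close H4 in the standard model you would need to strengthen the assumption on $H$ to something like a PRF/MAC keyed by $k_{w_i}$, or restructure the simulator so that it never has to equivocate. To be clear, the paper's own proof silently suffers from the same problem --- it computes $h_{w_q}^*$ at query time as if the index had not already been fixed at setup --- so you have surfaced a real weakness in the theorem as stated rather than introduced a new one; but as written your H4 does not go through under the collision-resistance assumption alone.
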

\begin{proof}
We need to construct a $PPT$ simulator $\mathcal{S}=\{\mathcal{S}_0,\mathcal{S}_1,$ $...,\mathcal{S}_q\}$ and an adversary $\mathcal{A}=\{\mathcal{A}_0,\mathcal{A}_1,$ $...,$ $\mathcal{A}_q\}$ to make the output of $Ideal^\Pi_{{\mathbb{A},\mathbb{S}}}(k)$ and $Real^\Pi_{{\mathbb{A}}}(k)$ be computationally indistinguishable.

Suppose that the simulator $\mathcal{S}$ is given the trace of a history $L=(|T_1|,\ldots,|T_n|,$ $|Inx|,\tau(TX_w))$ where $\tau(TX_w)$ denotes the search pattern and the access pattern about keyword $w$, then he can generate
$(Inx^*,T_1^*,$ $\ldots,T_n^*, Tr, S)$ and claim transaction $ask$ as follows:

\begin{itemize}
\item Simulating $T_1^*,\ldots,T_n^*$.

Because the encryption algorithm $\varepsilon=(Enc,Dec)$ is PCPA-secure, it guarantees the $C_1^*,\ldots,C_n^*$ in $Ideal^\Pi_{{\mathbb{A},\mathbb{S}}}(k)$ game are computationally indistinguishable from the $C_1,\ldots,C_n$ in $Real^\Pi_{{\mathbb{A}}}(k)$ game. Therefore, when the simulator $\mathcal{S}$ embeds $C_1^*,\ldots,C_n^*$ into transactions $T_1^*,\ldots,T_n^*$, they are computationally indistinguishable from the transactions $T_1,\ldots,T_n$ that generated in the $Real^\Pi_{{\mathbb{A}}}(k)  $ game.
    \item Simulating $Inx^*$.

    If $q=0$, $\mathcal S$ sets $t_w^*\leftarrow\{0,1\}^k$, $e_w^*\leftarrow\{0,1\}^k$,$h_w^*\leftarrow\{0,1\}^k$. Therefore, the $t_w,e_w,h_w$ generated in the step $Enc$ in the section 4.1 are computationally indistinguishable from $t_w^*,e_w^*,h_w^*$.

    When $q\geq 1$, $\mathcal S$ selects $l_{w_q}^*\leftarrow\{0,1\}^k$ and $k_{w_q}^*\leftarrow\{0,1\}^k$, then does $ e_{w_q}^*=\epsilon.Enc(l_{w_q}^*,DB^*(w_q))$, $h_{w_q}^*=H(k_{w_q}^*,C_{w_q1}^*\|\ldots \|C_{w_qn}^*)$. Because $F_2,F_3$ are pseudorandom functions, the $(e_{w_q}^*,h_{w_q}^*)$ is computationally indistinguishable from $(e_{w_q},h_{w_q})$ generated in step $Enc$.
    Because $F_1$ is a pseudorandom function, the $t_{w_q}$ generated in step $Enc$ is computationally indistinguishable from $t_{w_q}^*$ that $\mathcal S$ chooses at random from $\{0,1\}^k$.

Therefore, $Inx^*$ is computationally indistinguishable from $Inx$.
    \item Simulating $Tr$. In the transaction $Tr^*$, it embeds $t_w^*$ and $TX_{inx}$. Because $TX_{inx}$ is broadcasted to each other, $\mathcal A$ can get it easily. Here we only consider $t_w^*$ is indistinguishable from $t_{w}$. It uses the pseudorandom function $F_1$ to generate $t_w$ for keyword $w$ in the step $Trpdr$ in the section 4.1, and $t_w$ is indistinguishable from $t_w^*\leftarrow\{0,1\}^k$ that $\mathcal S$ chooses at random. Therefore, $Tr^*$ is computationally indistinguishable from $Tr$.
    \item Claiming the transaction $ask$ by using transaction $S$.

    When $q=0$, if $\mathcal A$ wants to get the money from the transaction $S^*$. $\mathcal S$ returns $(\{C_{i1},\ldots,C_{in}\},h_w)$ to $\mathcal A$, where $C_{ij}\leftarrow\{0,1\}^k (j=1,\ldots,n)$ and $h_w\leftarrow\{0,1\}^k$.
        When $q\geq 1$, $\mathcal S$ firstly returns $(\{C_{wq1},\ldots,C_{wqn}\})$ to $\mathcal A$, where $C_{wqj}(j=1,\ldots,n)$ is the history of access pattern about keyword $w_q$. $\mathcal S$ sets $k_{w_q}^*\leftarrow\{0,1\}^k$ and computes $h_{w_q}^*=H(k_{w_q}^*,C_{wq1}\|\ldots\|C_{wqn})$ which will be sent to $\mathcal A$. Because $F_3$ is a pseudorandom function, therefore the transaction $S$ that the $\mathcal A$ creates cannot claim the money from transaction $ask$.
\end{itemize}
\end{proof}

\section{Conclusion}
At present, the data on the blockchain is increasing, therefore, the search problem becomes more and more serious. Besides, the data on the existing blockchain does not contain anything about the privacy of users, which limits the usage of blockchain. In this paper, we firstly put users' data on the blockchain in the form of encryption. In order to support effective search, we propose two solution with $|O(D(w)|$ complexity, where $|D(w)|$ denotes the number of documents that contain keyword $w$.

Our scheme indirectly solves the search problem on the current blockchains. Moreover, our scheme is very suitable for the medical enterprises, social network and so on. However, the data on these platforms sometimes need to be modified, it means that the index also need be updated. Because the transaction cannot be reversed on the blockchain, which results in the index cannot be updated. Therefore, our scheme is not available for the dynamic data.
\section*{Acknowledgment}
This work is supported by the National Key R\&D Program of China (2017YFB0802503), Natural Science
Foundation of China (61672550), and Natural Science Foundation of Guangdong Province, China (2015A030313133), Fundamental Research Funds for the Central Universities (No.17lgjc45).

\end{document}